\newtheorem{example}{Example}
\newtheorem{theorem}{Theorem}
\newtheorem{definition}{Definition}
\DeclareMathOperator\supp{supp}
\def\mcl{\mathcal}
\begin{document}
	%
	\title{ Robustness of Maximal $\alpha$-Leakage to Side Information}
	
	\author{\IEEEauthorblockN{Jiachun Liao, Lalitha Sankar, Oliver Kosut}
	\IEEEauthorblockA{School of Electrical, Computer and Energy Engineering,\\
		Arizona State University\\
		Email: \{jiachun.liao,lalithasankar,okosut\}@asu.edu}
	\and
	\IEEEauthorblockN{Flavio P. Calmon}
	\IEEEauthorblockA{School of Engineering and Applied Sciences\\
		Harvard University\\
		Email: flavio@seas.harvard.edu\\  
	}
\thanks{This material is based upon work supported by the National Science Foundation under Grant No. CCF\--1350914.}
}
	\maketitle %
	\begin{abstract}
	Maximal $\alpha$-leakage is a tunable measure of information leakage based on the accuracy of guessing an arbitrary function of private data based on public data. The parameter $\alpha$ determines the loss function used to measure the accuracy of a belief, ranging from log-loss at $\alpha=1$ to the probability of error at $\alpha=\infty$. To study the effect of side information on this measure, we introduce and define \textit{conditional maximal $\alpha$-leakage}. We show that, for a chosen mapping (channel) from the actual (viewed as private) data to the released (public) data and some side information, the conditional maximal $\alpha$-leakage is the supremum (over all side information) of the conditional Arimoto channel capacity where the conditioning is on the side information. We prove that if the side information is conditionally independent of the public data given the private data, the side information cannot increase the information leakage.
    \end{abstract}

%

\IEEEpeerreviewmaketitle

\section{Introduction}

The performance of security and privacy systems can vary significantly depending on the side information available to an adversary (see, for example, \cite{PowerAttack_kocher1999, Deanonymized_narayanan2008}). 
In general, it is difficult to account for the specific implementation intricacies of real-world privacy mechanisms when determining the  risk posed by adversarial side information. Instead, a more tractable approach is to asses the side-information resilience of the \emph{privacy (or information leakage) metric} used to design a given mechanism. Ideally, a privacy metric should quantify not only the risk incurred against an adversary that observes the output of the system, but also capture the robustness against different amounts of side information an adversary may have.

Despite the array of (often overlapping) privacy/information leakage measures  proposed over the past decade,  few   metrics  ensure robustness against side information. Differential privacy (DP)  
\cite{Dwork_DP_Survey}, for example, captures privacy in the context of querying statistical databases.  One of the key advantages of DP is that it is robust to arbitrary external knowledge (side information). This robustness is formalized in \cite{DPRobusttoSideInform_Kasiviswanathan15}, wherein the authors model side information by a prior probability distribution on the support of the original dataset. More recently, Issa et al. introduced maximal leakage (MaxL), which is essentially the maximal logarithmic gain in the probability of correctly guessing any arbitrary function of original data from released data \cite{OperationalLeak_issa2018}. The impact of side information on MaxL was studied in \cite{OperationalLeak_issa2018}. In our previous work, we introduced maximal $\alpha$-leakage as a measure of information leakage, which is proved to be related to Sibson and Arimoto mutual information, and incorporates MaxL \cite{TunableLeak_JL18}.

In this work, we characterize the robustness of maximal $\alpha$-leakage to side information. We model side information as a random variable observed by an adversary that is interested in learning an arbitrary function of the original data from the released data. We represent this as a conditional Markov chain, which is also used by Issa \textit{et al.} to study the effect of side information on maxL \cite[Def. 6]{OperationalLeak_issa2018}. 
This formulation naturally leads to the definition of conditional maximal $\alpha$-leakage, which is an extended version of maximal $\alpha$-leakage that accounts for side information. We demonstrate that
maximal $\alpha$-leakage upper bounds conditional maximal $\alpha$-leakage if the side information is conditionally independent of the released data given the original data.  That is, maximal $\alpha$-leakage is robust to arbitrary side information that is not used in generating the released data from the original data. This surprising result provides further motivation for using $\alpha$-leakage as a robust and tunable privacy metric. 
Finally, other metrics of note that may be amenable to such analysis include probability of correct guessing \cite{privacyGuessing_asoodeh2017}, total variation-based metrics \cite{TVDprivacy_Rassouli&Gunduz18}, and metrics based on  R{\'e}nyi divergence \cite{R-DP_Mironov17}.


\section{Preliminaries}\label{Sec:Preliminaries}

We begin by reviewing maximal $\alpha$-leakage. To do so, we first present Sibson MI \cite{alphaMI_Sibson1969} and Arimoto MI \cite{AlphaMI_Arimoto1975}. 
\begin{definition}
   Let $X$ and $Y$ be two discrete random variables where $(X,Y)\sim P_{XY}$, 
   The Sibson MI of order $\alpha\in(0,1)\cup(1,\infty)$ is given by
	\begin{align}
	\hspace{-5pt}I_\alpha^{\text{S}}(X;Y)
	\label{eq:Sibson_MI}
	= \frac{\alpha}{\alpha-1}\log \sum\limits_{y}\left(\sum\limits_{x}P_X(x)P_{Y|X}(y|x)^{\alpha}\right)^{\frac{1}{\alpha}}.
	\end{align}
	The Arimoto MI of order $\alpha\in(0,1)\cup(1,\infty)$ is defined as
	\begin{align}
	\label{eq:Def_ArimotoMI}
	I_\alpha^{\text{A}}(X;Y)&\triangleq H_{\alpha}(X)-H_{\alpha}(X|Y)\\
	\label{eq:Arimoto_MI}
	&=\frac{\alpha}{\alpha-1}\log\sum\limits_{y }\mathsmaller{\left(\frac{\sum\limits_{x}P_X(x)^{\alpha}P_{Y|X}(y|x)^{\alpha}}{\sum\limits_{x }P_X(x)^{\alpha}}\right)^{\frac{1}{\alpha}}},
	\end{align}
	where $H_{\alpha}$ is R{\'e}nyi entropy \cite{measures_renyi1961} and $H_{\alpha}(X|Y)$ is Arimoto conditional entropy of $X$ given $Y$ defined as
	\begin{align}
	\label{eq:Def_ArimotoConditionalEntropy}
	H_{\alpha}(X|Y)=
	\frac{\alpha}{1-\alpha}\log\sum\limits_{y}\mathsmaller{\left(\sum\limits_{x}\mathsmaller{P_X(x)^{\alpha}P_{Y|X}(y|x)^{\alpha}}\right)^{\frac{1}{\alpha}}}.
	\end{align}
	All of these quantities are defined by their continuous extensions for $\alpha=1$ or $\infty$. Note that both the Sibson and Arimoto MIs reduce to the Shannon MI at $\alpha=1$.
\end{definition}

Let $X$ and $Y$ represent the original (private) data and released (public) data, respectively, and let $U$ represent an arbitrary (potentially random) function of $X$ that the observer (a curious or malicious user with access to the released data $Y$) is interested in learning. 
We introduced maximal $\alpha$-leakage in \cite[Def. 5]{TunableLeak_JL18} to quantify an adversary's ability to infer \textit{any function} (ranging from the maximal likely realization to the posterior distribution) of data $X$ from the released $Y$. We review the definition below.
\begin{definition}[Maximal $\alpha$-Leakage]\label{Def:GeneralLeakge}
	Given a joint distribution $P_{X,Y}$ on finite alphabets $\mcl X\times\mcl Y$, the maximal $\alpha$-leakage, for $1\leq \alpha\leq \infty$, from $X$ to $Y$ is defined as
	\begin{align}
	&\mcl L_{\alpha}^{\text{max}}(X\to Y)\nonumber\\
	\label{eq:MaxAlplLeak_definition}
	\triangleq&\sup_{U- X- Y }\lim_{\alpha'\to \alpha} \frac{\alpha'}{\alpha'-1}\log\frac{\max\limits_{P_{\widehat{U}|Y}}\mathbb{E}\left[P_{\widehat{U}|Y}(U|Y)^{\frac{\alpha'-1}{\alpha'}}\right]}{\max\limits_{P_{\widehat{U}}}\mathbb{E}\left[P_{\widehat{U}}(U)^{\frac{\alpha'-1}{\alpha'}}\right]},	\\
	\label{eq:MaxAlplLeak_EquivDef}
	=&\begin{cases}
	\sup\limits_{P_{\tilde{X}}}I^{\text{S}}_\alpha(\tilde{X};Y)=\sup\limits_{P_{\tilde{X}}}I_{\alpha}^{\text{A}}(\tilde{X};Y),& 1<\alpha\leq \infty  \\
	I(X;Y),  &\alpha=1  
	\end{cases},
	\end{align}
	where (i) in \eqref{eq:MaxAlplLeak_definition} $U$ represents any function of $X$ and takes values from an arbitrary alphabet, and the objective function is defined as the $\alpha$-leakage from $U$ to $Y$; (ii) $I_{\alpha}^{\text{S}}$ and $I_{\alpha}^{\text{A}}$ in \eqref{eq:MaxAlplLeak_EquivDef} indicate Sibson and Arimoto MIs, respectively\cite{alphaMI_Sibson1969,AlphaMI_Arimoto1975}.
\end{definition}
Note that the optimal $P_{\hat{U}|Y}^*$ of the maximization in the numerator of the logarithmic term in \eqref{eq:MaxAlplLeak_definition} minimizes the expectation of the following \textit{$\alpha$-loss} function
\begin{equation}\label{poly_loss}
\ell_{\alpha}(u,y,P_{\hat{U}|Y})=\frac{\alpha}{\alpha-1} \big(1-P_{\hat{U}|Y}(u|y)^{\frac{\alpha-1}{\alpha}}\big),
\end{equation}
for each $\alpha\in(1,\infty)$. The limit of the loss function in \eqref{poly_loss} leads to the 0-1 loss (for $\alpha=1$) and the probability of (guessing) error (for $\alpha=\infty$) functions, respectively. 
Consequently, for $\alpha=1$ and $\infty$, maximal $\alpha$-leakage simplifies to MI and MaxL, respectively. For $\alpha> 1$, maximal $\alpha$-leakage is essentially the Arimoto channel capacity (with a support-set constrained input distribution) \cite{AlphaMI_Arimoto1975}.

\section{Conditional Tunable Information Leakage Measures}\label{Sec:Information Leakage Measures}	
Given a pair of original and released data $(X,Y)$, let $Z$ be the knowledge of some particular adversary or third-party about $(X,Y)$. Before introducing the conditional maximal $\alpha$-leakage, we introduce the following simpler measure, the conditional $\alpha$-leakage. Here, the adversary is interested only in guessing $X$, rather than a function of $X$.
    
	\begin{definition}[{Conditional $\alpha$-Leakage}]\label{Def:CondAlphaLeak_Def}
	Given a joint distribution $P_{XYZ}$ and an estimator $\hat{X}$ with the same support as $X$, the $\alpha$-leakage from $X$ to $Y$ given $Z$ is defined as
	\begin{align}
	&\mcl L_{\alpha}(X \to  Y|Z)\nonumber\\
	\triangleq\,\,&\frac{\alpha}{\alpha-1}\log\frac{\max\limits_{P_{\hat{X}|Y,Z}}\mathbb{E}\left[P_{\hat{X}|Y,Z}(X|Y,Z)^{\frac{\alpha-1}{\alpha}}\right]}{\max\limits_{P_{\hat{X}|Z}}\mathbb{E}\left[P_{\hat{X|Z}}(X|Z)^{\frac{\alpha-1}{\alpha}}\right]}
		\label{eq:CondAlphaLeak_Def}
	\end{align}
	for $1<\alpha<\infty$ and by the continuous extension of \eqref{eq:CondAlphaLeak_Def} for $\alpha = 1$ and $\infty$.
   \end{definition}
The conditional $\alpha$-leakage quantifies the maximal logarithmic gain in inferring various information about $X$ when an adversary \textit{with arbitrary side information $Z$} has access to $Y$. To understand the effect of the side information $Z$ on leakage about \textit{any function} $U$ of $X$ through $Y$, we define conditional maximal $\alpha$-leakage as follows.
	\begin{definition}[Maximal Conditional $\alpha$-Leakage]\label{Def:CondMaxAlphaLeak_Def}
		Given a joint distribution $P_{XYZ}$, 
		for $1\leq\alpha\leq\infty$, the conditional maximal $\alpha$-leakage from $X$ to $Y$ given $Z$ is defined as
		\begin{align}
		 \mcl L_{\alpha}^{\text{max}}(X\to Y|Z) 	
		\triangleq\, \sup_{U:U- X- Y|Z }\mcl L_{\alpha}(U \to  Y|Z)
		\label{eq:CondMaxAlphaLeak_Def}
		\end{align}
		 where $U$ represents any function of $X$ and takes values from an arbitrary alphabet.
		 Moreover, the expression $U-X-Y|Z$ represents the conditional Markov chain constraint where
		 	\begin{align}
		 	P_{UXY|Z}(uxy|z)
		 	=P(x|z)P(u|xz)P(y|xz)\label{eq:CondMarkovChain}.
		 	\end{align}
		 	 Therefore, the conditional Markov chain $U-X-Y|Z$ is equivalent to $U-(X,Z)-Y$.
	\end{definition}
Note that conditional maximal $\alpha$-leakage takes side information $Z$ into consideration via the conditional Markov chain $U-X-Y|Z$, which is equivalent to $U-(X,Z)-Y$. Therefore, conditional maximal $\alpha$-leakage is designed under the two assumptions: side information $Z$ can be arbitrarily related to $X$ and $U$, and the released data $Y$ will not provide more information about $U$ than $X$ and $Z$.

The Markov chain $U-X-Y$ models inferences for a function $U$ of $X$ from $Y$. To involve side information $Z$ in the inferences, 
beyond the conditional Markov chain in Def. \ref{Def:CondMaxAlphaLeak_Def}, there are two other possibilities: 
\begin{itemize}
		\item[(i)] If the side information $Z$ that an adversary has is arbitrarily related to the function of interest $U$, but conditionally independent of released data $Y$ given $X$, we have $(U,Z)-X-Y$. For example, if $X$ is an individual's public records \textit{without} voter registration indicated by $Z$ and $Y$ is a noisy release of $X$, then when $U$ is the political preference of this person, $Z$ can provide extra information about $U$ and is conditionally independent of $Y$ given $X$. 
   
		\item[(ii)] If the side information $Z$ does not provide more information about the function of interest $U$ than original data $X$ does, but can be arbitrarily related to the released data $Y$, we have $U-X-(Y,Z)$. For example, if $X$ is an individual's public records \textit{with} voter registration, $Z$ is a noisy release of the voter registration in $X$, and $Y$ is an update of $Z$, then when $U$ is the political preference of this person, $Z$ cannot provide extra information about $U$ than $X$ does but it can be helpful in inferring $U$ from $Y$ (i.e., the Markov chain $Z-X-Y$ does not hold).
\end{itemize}
Note that in either Markov chain mentioned above, $U$ and $Y$ are conditionally independent given $X$ and $Z$. In this sense, the proposed conditional Markov chain generally models side information in privacy-protection problems.

\section{Main Results}
In this section, we explore the effect of side information on inferring any function of original data from released data. First, we simplify the expression of conditional maximal $\alpha$-leakage, and then, compare leakages of a privacy mechanism measured by conditional maximal $\alpha$-leakage and maximal $\alpha$-leakage.
	
The following theorem simplifies the expression of the conditional $\alpha$- leakage in \eqref{eq:CondAlphaLeak_Def} as a conditional Arimoto MI based on Arimoto conditional entropy.
	\begin{definition}
		Given a joint distribution $P_{X,Y,Z}$, the conditional Arimoto mutual information, for $1\leq\alpha\leq \infty$, between $X$ and $Y$ given $Z$ is defined as
		\begin{align}\label{eq:ConAlphaMI}
			I_{\alpha}^{\text{A}}(X;Y|Z)
			\triangleq
			H_{\alpha}^{\text{A}}(X|Z)-H_{\alpha}^{\text{A}}(X|YZ)
		\end{align}
		where $H_{\alpha}^{\text{A}}(\cdot|\cdot)$ indicates Arimoto conditional entropy.
	\end{definition}
Note that for $\alpha=1$, the conditional Arimoto MI in \eqref{eq:ConAlphaMI} is exactly the conditional Shannon MI $I(X;Y|Z)$.	
	\begin{theorem}\label{Thm:CondAlphaLeak_Equialent}
		For $\alpha\in[1,\infty]$, conditional $\alpha$-leakage defined in \eqref{eq:CondAlphaLeak_Def} simplifies to
		\begin{align}
		\mcl L_{\alpha}(X\to Y|Z)
		= I_{\alpha}^{\text{A}}(X;Y|Z)\label{eq:CondAlphaLeak_EqConAlphaMI}.
		\end{align}	
	\end{theorem}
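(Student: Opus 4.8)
The plan is to evaluate the two inner maximizations in the definition \eqref{eq:CondAlphaLeak_Def} in closed form, identify the results as exponentials of Arimoto conditional entropies, and then simplify. I would carry this out for $\alpha\in(1,\infty)$ and dispatch the endpoints $\alpha=1,\infty$ afterward by the continuous extensions of both sides.

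\emph{Step 1: evaluate the maximizations.} Expanding the numerator of the logarithmic term,
\begin{align*}
&\mathbb{E}\!\left[P_{\hat{X}|Y,Z}(X|Y,Z)^{\frac{\alpha-1}{\alpha}}\right]\\
&\qquad=\sum_{y,z}P_{YZ}(y,z)\sum_{x}P_{X|YZ}(x|y,z)\,P_{\hat{X}|YZ}(x|y,z)^{\frac{\alpha-1}{\alpha}},
\end{align*}
so the optimization over $P_{\hat{X}|YZ}$ decouples across the pairs $(y,z)$. For a fixed probability vector $p=(p_x)$, maximizing $\sum_x p_x\,q_x^{(\alpha-1)/\alpha}$ over probability vectors $q=(q_x)$ is a concave program since the exponent $\tfrac{\alpha-1}{\alpha}$ lies in $(0,1)$ for $\alpha>1$; a Lagrange-multiplier computation --- equivalently H\"older's inequality --- shows the maximizer is the $\alpha$-tilted posterior $q_x^\star\propto p_x^{\alpha}$, with optimal value $\big(\sum_x p_x^{\alpha}\big)^{1/\alpha}$. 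This is exactly the optimal $\alpha$-loss estimator discussed after Definition~\ref{Def:GeneralLeakge}, here applied conditionally on $(Y,Z)$, and on $Z$ alone for the denominator. Hence the numerator equals $\sum_{y,z}P_{YZ}(y,z)\big(\sum_x P_{X|YZ}(x|y,z)^{\alpha}\big)^{1/\alpha}$ and the denominator equals $\sum_{z}P_{Z}(z)\big(\sum_x P_{X|Z}(x|z)^{\alpha}\big)^{1/\alpha}$.

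\emph{Step 2: identify the entropies and assemble.} Using $P_X(x)^{\alpha}P_{YZ|X}(y,z|x)^{\alpha}=P_{XYZ}(x,y,z)^{\alpha}=P_{YZ}(y,z)^{\alpha}P_{X|YZ}(x|y,z)^{\alpha}$ in \eqref{eq:Def_ArimotoConditionalEntropy}, applied once with conditioning variable $(Y,Z)$ and once with $Z$, the two sums above equal $\exp\!\big(\tfrac{1-\alpha}{\alpha}H_{\alpha}^{\text{A}}(X|YZ)\big)$ and $\exp\!\big(\tfrac{1-\alpha}{\alpha}H_{\alpha}^{\text{A}}(X|Z)\big)$, respectively. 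Plugging these into \eqref{eq:CondAlphaLeak_Def}, the prefactor $\tfrac{\alpha}{\alpha-1}$ cancels $\tfrac{1-\alpha}{\alpha}$ up to a sign, leaving $\mcl L_{\alpha}(X\to Y|Z)=H_{\alpha}^{\text{A}}(X|Z)-H_{\alpha}^{\text{A}}(X|YZ)=I_{\alpha}^{\text{A}}(X;Y|Z)$, i.e.\ \eqref{eq:CondAlphaLeak_EqConAlphaMI}. For $\alpha=1$ both sides are defined by continuous extension and equal $I(X;Y|Z)$; for $\alpha=\infty$ the exponent $\tfrac{\alpha-1}{\alpha}\to1$ and the prefactor $\to1$, the maximizer concentrates on a MAP estimate, and one obtains $\log\frac{\sum_{y,z}P_{YZ}(y,z)\max_x P_{X|YZ}(x|y,z)}{\sum_z P_Z(z)\max_x P_{X|Z}(x|z)}$, which is $H_{\infty}^{\text{A}}(X|Z)-H_{\infty}^{\text{A}}(X|YZ)$.

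I expect the only delicate point to be the per-letter maximization in Step~1 --- namely, confirming that the Lagrangian stationary point is the global maximum (guaranteed by concavity) and handling support technicalities: pairs $(y,z)$ with $P_{YZ}(y,z)=0$ and symbols with zero posterior mass contribute nothing, and the $\alpha$-tilted optimizer remains supported within the alphabet of $X$, so it is admissible. The rest is bookkeeping with the definitions of Arimoto (conditional) entropy and the conditional Arimoto MI, together with a continuity argument for $\alpha\in\{1,\infty\}$.
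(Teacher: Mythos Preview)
Your proposal is correct and follows essentially the same route the paper indicates: solve the two inner maximizations in \eqref{eq:CondAlphaLeak_Def} via the KKT/Lagrange conditions (your per-letter concave program with optimizer $q_x^\star\propto p_x^\alpha$), then recognize the resulting ratios as Arimoto conditional entropies. The paper omits the details, pointing to \cite[Thm.~1]{TunableLeak_JL18}, so your write-up is in fact more explicit than what appears here.
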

The proof hinges on solving the two optimal problems in \eqref{eq:CondAlphaLeak_Def} by using Karush-–Kuhn-–Tucker (KKT) conditions. 
As this proof is nearly identical to that of \cite[Thm. 1]{TunableLeak_JL18}, we omit it.

Based on the result of Thm. \ref{Thm:CondAlphaLeak_Equialent}, we obtain a simplified expression for conditional maximal $\alpha$-leakage. Specifically, the simplified expression for $\alpha>1$ is related to a variant of the Sibson MI defined as follows.
\begin{definition}\label{Def:S-MICondEvent}
	Let $P_{X,Y|Z=z}$ indicate a conditional joint distribution of $X,Y$ given an event $Z=z$. The event-conditional Sibson MI between $X$ and $Y$ given $Z=z$ is defined 
	\begin{IEEEeqnarray}{l}\label{eq:Def_S-MICondEvent}
		\hspace{-10pt}I_{\alpha}^{\text{S}}(X;Y|Z\hspace{-2pt}=\hspace{-2pt}z)\hspace{-2pt}=\hspace{-2pt}\frac{\alpha}{\alpha-1}\hspace{-2pt}\log \sum\limits_{y}\hspace{-2pt}\left(\hspace{-2pt}\sum\limits_{x} P(x|z) P(y|x,z)^\alpha\hspace{-2pt}\right)^{\hspace{-2pt}\frac{1}{\alpha}}
	\end{IEEEeqnarray}
for $1<\alpha<\infty$ and by the continuous extension of \eqref{eq:Def_S-MICondEvent} for $\alpha = 1$ and $\infty$.
\end{definition}

\begin{theorem}\label{Thm:MaxCondAlphaLeak_Equialent}
	For $\alpha\in[1,\infty]$, the conditional maximal $\alpha$-leakage defined in \eqref{eq:CondMaxAlphaLeak_Def} simplifies to
	\begin{align}\label{eq:CondMaxAlphaLeak_Equialent}
	\hspace{-10pt}&\mcl L_{\alpha}^{\text{max}}(X\to Y|Z)\nonumber\\
	\hspace{-10pt}=&\begin{cases}
	\sup\limits_{z\in \supp(Z)}\, \sup\limits_{\substack{P_{\tilde{X}|Z=z}\\ \ll P_{X|Z=z}}}\,I_{\alpha}^{\text{S}}(\tilde{X};Y|Z=z),& \alpha\in(1,\infty]\\
	I(X;Y|Z),&\alpha=1.
	\end{cases}
	\end{align}	
	where $\supp(Z)$ indicates the support of $Z$ and $I_{\alpha}^{\text{S}}(X;Y|Z=z)$ is defined in \eqref{eq:Def_S-MICondEvent}. 
\end{theorem}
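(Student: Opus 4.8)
The plan is to reduce \eqref{eq:CondMaxAlphaLeak_Equialent} to the \emph{unconditional} identity
\[
\sup\nolimits_{U-A-B}I_\alpha^{\text{A}}(U;B)=\sup\nolimits_{P_{\tilde A}\ll P_A}I_\alpha^{\text{S}}(\tilde A;B),\qquad 1<\alpha\le\infty,
\]
valid for any joint law $P_{A,B}$ on finite alphabets (this is a restatement of \eqref{eq:MaxAlplLeak_definition}--\eqref{eq:MaxAlplLeak_EquivDef}, using that the $\alpha$-leakage from $U$ to $B$ equals $I_\alpha^{\text{A}}(U;B)$), applied separately to every conditional law $P_{X,Y|Z=z}$. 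Fix $\alpha\in(1,\infty)$; the endpoints follow by continuous extension, and for $\alpha=1$ directly, since $U-X-Y|Z$ forces $I(U;Y|Z)\le I(X;Y|Z)$ with equality at $U=X$. By Theorem~\ref{Thm:CondAlphaLeak_Equialent}, $\mcl L_\alpha^{\text{max}}(X\to Y|Z)=\sup_{U:\,U-X-Y|Z}I_\alpha^{\text{A}}(U;Y|Z)$, so it remains to analyze $I_\alpha^{\text{A}}(U;Y|Z)$.

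Writing $P_{U,Y,Z}(u,y,z)=P_Z(z)P_{U,Y|Z}(u,y|z)$ in the Arimoto conditional entropies of Definition~1, with $\gamma_z=\big(\sum_u P_{U|Z}(u|z)^\alpha\big)^{1/\alpha}$ and $\beta_z=\sum_y\big(\sum_u P_{U,Y|Z}(u,y|z)^\alpha\big)^{1/\alpha}$, one obtains $H_\alpha^{\text{A}}(U|Z)=\tfrac{\alpha}{1-\alpha}\log\sum_z P_Z(z)\gamma_z$ and $H_\alpha^{\text{A}}(U|YZ)=\tfrac{\alpha}{1-\alpha}\log\sum_z P_Z(z)\beta_z$, hence $I_\alpha^{\text{A}}(U;Y|Z)=\tfrac{\alpha}{\alpha-1}\log\frac{\sum_z P_Z(z)\beta_z}{\sum_z P_Z(z)\gamma_z}$, whereas the event-conditional Arimoto MI is $I_\alpha^{\text{A}}(U;Y|Z=z)=\tfrac{\alpha}{\alpha-1}\log(\beta_z/\gamma_z)$. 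Since each $\gamma_z>0$, the elementary inequality $\frac{\sum_z P_Z(z)\beta_z}{\sum_z P_Z(z)\gamma_z}\le\max_z\frac{\beta_z}{\gamma_z}$ gives $I_\alpha^{\text{A}}(U;Y|Z)\le\max_{z\in\supp(Z)}I_\alpha^{\text{A}}(U;Y|Z=z)$. For fixed $z$ the conditional chain \eqref{eq:CondMarkovChain} restricts to the ordinary chain $U-X-Y$ under $P_{\cdot|Z=z}$, so the displayed unconditional identity applied to $P_{X,Y|Z=z}$ yields $I_\alpha^{\text{A}}(U;Y|Z=z)\le\sup_{P_{\tilde X|Z=z}\ll P_{X|Z=z}}I_\alpha^{\text{S}}(\tilde X;Y|Z=z)$. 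Maximizing over $z$ and then over $U$ proves ``$\le$'' in \eqref{eq:CondMaxAlphaLeak_Equialent}.

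For ``$\ge$'', let $z_0\in\supp(Z)$ attain $L^*:=\max_z\sup_{P_{\tilde X|Z=z}\ll P_{X|Z=z}}I_\alpha^{\text{S}}(\tilde X;Y|Z=z)$. By the displayed identity for $P_{X,Y|Z=z_0}$, given $\varepsilon>0$ there is a (randomized) function $V$ of $X$ with $V-X-Y$ under $P_{\cdot|Z=z_0}$ and $I_\alpha^{\text{A}}(V;Y|Z=z_0)\ge L^*-\varepsilon$. For $N\in\mathbb{N}$ let $U_N$ satisfy $U_N-X-Y|Z$ by setting $U_N=V$ on $\{Z=z_0\}$ and $U_N=(X,K)$, with $K$ uniform on $\{1,\dots,N\}$ independent of all else, on $\{Z\neq z_0\}$. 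A short computation shows that for $z\neq z_0$ both $\beta_z$ and $\gamma_z$ equal $N^{(1-\alpha)/\alpha}$ times a positive constant, so as $N\to\infty$ these terms vanish in the weighted ratio, the $z_0$-terms dominate, and $I_\alpha^{\text{A}}(U_N;Y|Z)\to I_\alpha^{\text{A}}(V;Y|Z=z_0)\ge L^*-\varepsilon$; letting $\varepsilon\downarrow0$ gives $\mcl L_\alpha^{\text{max}}(X\to Y|Z)\ge L^*$, completing \eqref{eq:CondMaxAlphaLeak_Equialent}. I expect the achievability step to be the main obstacle: one must lift the near-optimal $V$ for the single conditional world $Z=z_0$ to a globally feasible $U$ without the remaining values of $Z$ diluting the leakage, and the device that makes it work is to let $U$ be an arbitrarily fine refinement of $X$ off the event $Z=z_0$, which sends the corresponding $\beta_z,\gamma_z$ to zero.
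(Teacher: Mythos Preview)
Your proof is correct and follows the same overall architecture as the paper's: the upper bound uses the inequality $\frac{\sum_z a_z}{\sum_z b_z}\le\max_z\frac{a_z}{b_z}$ to reduce $I_\alpha^{\text{A}}(U;Y|Z)$ to an event-conditional Arimoto MI, and the lower bound constructs a $U$ that is near-optimal on the single event $\{Z=z_0\}$ while an auxiliary alphabet of growing cardinality drives the contributions from $\{Z\neq z_0\}$ to zero. The one organizational difference is that you invoke the unconditional identity $\sup_{U-A-B}I_\alpha^{\text{A}}(U;B)=\sup_{P_{\tilde A}\ll P_A}I_\alpha^{\text{S}}(\tilde A;B)$ as a black box in both directions---in particular, for the lower bound you import a near-optimal $V$ for the conditional world $Z=z_0$ and then pad the other worlds with the refinement $(X,K)$---whereas the paper redoes the inner construction from scratch, using disjoint sets $\mathcal{U}_{x,z^\star}$ whose cardinalities are tuned to approximate an arbitrary $P_{\tilde X}$ on $\mathcal{X}_{z^\star}$, together with a dump set $\mathcal{U}_0$ (with $U$ uniform on $\mathcal{U}_0$ and \emph{independent} of $X$ for $z\neq z^\star$) in place of your $(X,K)$. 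Your version is more modular and slightly shorter; the paper's is more self-contained and avoids appealing to the prior theorem.
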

A detailed proof is in the Appendix. Note that given a channel $P_{Y|X}$, $\sup_{X} I_{\alpha}^{\text{S}}(X;Y)=\sup_{X} I_{\alpha}^{\text{A}}(X;Y)$ for $1\leq\alpha\leq \infty$, and the quantity is called Arimoto channel capacity \cite{RenyiandHypothesisTest_Csiszar1995,alphaMI_verdu}. Thus, for $\alpha>1$, conditional maximal $\alpha$-leakage is the maximal conditional Arimoto channel capacity of channels (from $X$ to $Y$) where the channel state is controlled by $Z$.

The following theorem shows a relationship between conditional maximal $\alpha$-leakage and maximal $\alpha$-leakage.
\begin{theorem}\label{Thm:CondionReduceMaxAlphaLeak}
	For conditional maximal $\alpha$-leakage defined in \eqref{eq:CondMaxAlphaLeak_Def}, if $Z-X-Y$ holds, then
	\begin{align}\label{eq:CondionReduceMaxAlphaLeak}
	\mcl L_{\alpha}^{\text{max}}(X\to Y|Z)\leq \mcl L_{\alpha}^{\text{max}}(X\to Y).
	\end{align}	
\end{theorem}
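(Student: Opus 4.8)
The plan is to derive this as a consequence of Theorem~\ref{Thm:MaxCondAlphaLeak_Equialent}, treating $\alpha\in(1,\infty]$ and $\alpha=1$ separately. The conceptual core is a single observation: under the Markov chain $Z-X-Y$, conditioning on $Z$ alters only the \emph{input distribution} fed to the channel, not the channel $P_{Y|X}$ itself, so no additional leakage can be created.

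For $\alpha\in(1,\infty]$, I would start from the simplified formula \eqref{eq:CondMaxAlphaLeak_Equialent}, reducing the claim to a uniform bound on $I_{\alpha}^{\text{S}}(\tilde X;Y|Z=z)$ --- defined in \eqref{eq:Def_S-MICondEvent} --- over $z\in\supp(Z)$ and over all $P_{\tilde X|Z=z}\ll P_{X|Z=z}$. Fix such a $z$ and $P_{\tilde X|Z=z}$. For every $x\in\supp(P_{\tilde X|Z=z})\subseteq\supp(P_{X|Z=z})$ we have $P_{XZ}(x,z)>0$, so the hypothesis $Z-X-Y$ gives $P_{Y|X,Z}(y|x,z)=P_{Y|X}(y|x)$; substituting this into \eqref{eq:Def_S-MICondEvent} term by term yields
\begin{align}
I_{\alpha}^{\text{S}}(\tilde X;Y|Z=z)
&=\frac{\alpha}{\alpha-1}\log \sum_{y}\Big(\sum_{x} P_{\tilde X|Z=z}(x)\,P_{Y|X}(y|x)^{\alpha}\Big)^{\frac{1}{\alpha}}\nonumber\\
&=I_{\alpha}^{\text{S}}(\tilde X;Y),
\end{align}
where on the right $\tilde X\sim P_{\tilde X|Z=z}$ and the channel is the original $P_{Y|X}$. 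Since $\supp(P_{\tilde X|Z=z})\subseteq\supp(P_{X|Z=z})\subseteq\supp(P_X)$, the distribution $P_{\tilde X|Z=z}$ is an admissible tilting of $X$ in \eqref{eq:MaxAlplLeak_EquivDef}, whence $I_{\alpha}^{\text{S}}(\tilde X;Y)\le\sup_{P_{\tilde X}}I_{\alpha}^{\text{S}}(\tilde X;Y)=\mcl L_{\alpha}^{\text{max}}(X\to Y)$. Taking the supremum over $P_{\tilde X|Z=z}$ and then over $z\in\supp(Z)$ preserves this bound, which is exactly \eqref{eq:CondionReduceMaxAlphaLeak}.

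For $\alpha=1$ the input distribution in $\mcl L_1^{\text{max}}(X\to Y)=I(X;Y)$ is fixed, so the argument above does not apply verbatim and I would instead argue directly, using $\mcl L_1^{\text{max}}(X\to Y|Z)=I(X;Y|Z)$ from Theorem~\ref{Thm:MaxCondAlphaLeak_Equialent}. Expanding $I(X;YZ)$ by the chain rule in the two orders and expanding $I(XY;Z)$ likewise, and using that $Z-X-Y$ (equivalently $Y-X-Z$) forces $I(Y;Z|X)=0$, one obtains $I(X;Z|Y)=I(X;Z)-I(Y;Z)$ and hence $I(X;Y|Z)=I(X;Y)-I(Y;Z)\le I(X;Y)$. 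Alternatively, one may simply let $\alpha\to1^{+}$ in the inequality already established for $\alpha>1$, since by definition both sides are the continuous extensions of the $\alpha>1$ quantities.

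The real work here lives entirely in Theorem~\ref{Thm:MaxCondAlphaLeak_Equialent}; once that is available, this statement is close to a corollary, and the only items that need care are bookkeeping ones --- verifying that the substitution $P_{Y|X,Z}(\cdot|x,z)=P_{Y|X}(\cdot|x)$ is legitimate precisely on the support over which it is used, and that enlarging the feasible set of tilted distributions from $\{P_{\tilde X|Z=z}\ll P_{X|Z=z}\}$ to all of $\{P_{\tilde X}\}$ can only increase the supremum. I do not expect a genuine obstacle beyond stating these support facts carefully.
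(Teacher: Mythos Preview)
Your proposal is correct and follows essentially the same route as the paper: apply Theorem~\ref{Thm:MaxCondAlphaLeak_Equialent}, use $Z-X-Y$ to replace $P_{Y|X,Z}$ by $P_{Y|X}$, enlarge the feasible set of input tilts from $\{P_{\tilde X}\ll P_{X|Z=z}\}$ to $\{P_{\tilde X}\ll P_X\}$, and for $\alpha=1$ invoke $I(X;Y|Z)\le I(X;Y)$ under the Markov chain. Your explicit support bookkeeping and chain-rule derivation of $I(X;Y|Z)=I(X;Y)-I(Y;Z)$ are slightly more detailed than the paper, which simply cites \cite{IT_Cover} for the $\alpha=1$ inequality, but the argument is the same.
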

\begin{proof}
From Thm. \ref{Thm:MaxCondAlphaLeak_Equialent}, we have that for $\alpha>1$
\begin{IEEEeqnarray}{r l}
	&\mcl L_{\alpha}^{\text{max}}(X\to Y|Z)\nonumber\\
	=&	\sup\limits_z \hspace{-3pt}\sup\limits_{P_{\tilde{X}}\ll P_{X|Z=z}}\hspace{-3pt} \frac{\alpha}{\alpha-1}\log \sum\limits_{y}\hspace{-2pt}\left(\hspace{-1pt}\sum\limits_{x} P_{\tilde{X}}(x) P_{Y|X}(y|x)^\alpha\hspace{-2pt}\right)^{\frac{1}{\alpha}}\label{eq:CondRedLeak_InPf1}\\
	\leq &  \sup\limits_{P_{\tilde{X}}\ll P_{X}} \frac{\alpha}{\alpha-1}\log \sum\limits_{y}\left(\sum\limits_{x} P_{\tilde{X}}(x) P_{Y|X}(y|x)^\alpha\right)^{\frac{1}{\alpha}}\label{eq:CondRedLeak_InPf2}\\
	=&\mcl L_{\alpha}^{\text{max}}(X\to Y)\label{eq:CondRedLeak_InPf4}
\end{IEEEeqnarray}
where \eqref{eq:CondRedLeak_InPf1} holds because the Markov chain $Z-X-Y$ allows us to replace $P_{Y|X,Z}$ with $P_{Y|X}$; and the inequality in \eqref{eq:CondRedLeak_InPf2} is from the fact that for any $z$ conditioning on $Z$ can only reduce the support of $X$;
and the equality in \eqref{eq:CondRedLeak_InPf4} is from Thm. 2 in \cite{TunableLeak_JL18}. 
For $\alpha=1$, from Thm. \ref{Thm:MaxCondAlphaLeak_Equialent} we have
\begin{align}
\mcl L_{1}^{\text{max}}(X\to Y|Z)=I(X;Y|Z),
\end{align} such that if $Z-X-Y$ holds, 
\begin{align}
I(X;Y|Z)\leq I(X;Y)=\mcl L_{1}^{\text{max}}(X\to Y),
\end{align} where the inequality and equality are from \cite[Sec. 2.8]{IT_Cover} and \cite[Thm. 2]{TunableLeak_JL18}, respectively. Therefore, for $Z-X-Y$, $\mcl L_{\alpha}^{\text{max}}(X\to Y|Z)\leq \mcl L_{\alpha}^{\text{max}}(X\to Y)$. 
\end{proof}	
Thm. \ref{Thm:CondionReduceMaxAlphaLeak} shows that if side information ($Z$) and released data ($Y$) is conditionally independent on the original data ($X$), the amount of information that $Y$ can leak about $X$ will not increase. That is, if side information is not involved in generating the released data from the original data, in terms of maximal $\alpha$-leakage, it will not help an adversary get more information about the original data from the released data. Therefore, the (unconditional) maximal $\alpha$-leakage represents a bound not only on the amount of information leaked in $Y$ about an arbitrary function of $X$, but it is also a bound on the amount of information leaked in $Y$ about $X$ to an adversary with \emph{arbitrary side-information}, provided $Y$ is generated from $X$ using only private randomness. This gives significant new meaning to the maximal $\alpha$-leakage. The following example illustrates the result in Thm. \ref{Thm:CondionReduceMaxAlphaLeak}.
	\begin{example}\label{exmp:binary-condReduceLeak}
		Let the original data $X$ uniformly take values from the binary alphabet $\{0,1\}$, and the released data $Y$ be generated by a binary symmetric channel with a crossover probability $0<p<0.5$. 
		Here, the maximal $\alpha$-leakage from $X$ to $Y$ is 
		\begin{align}\label{eq:MaxAlphaLeak-Example}
		&\mcl L^{\text{max}}_{\alpha}(X\to Y)\nonumber\\
		=&\begin{cases}
		\log 2+\frac{1}{\alpha-1}\log \left(p^{\alpha}+(1-p)^\alpha\right),&\alpha>1\\
		\log 2-H(p),&\alpha=1
		\end{cases}
		\end{align}
		where $H(p)=-p\log p-(1-p)\log (1-p)$.
		Let the side information $Z\in \{0,1\}$ be generated from $X$ via a binary symmetric channel with a crossover probability $0\leq q\leq 0.5$, such that $Z-X-Y$ holds. From Thm. \ref{Thm:MaxCondAlphaLeak_Equialent}, we know that $\mcl L^{\text{max}}_{\alpha}(X\to Y|Z)=0$ for $q=0$, and if $q\neq 0$
		\begin{align}
		&\mcl L^{\text{max}}_{\alpha}(X\to Y|Z)\nonumber\\
		=&\begin{cases}
		\log 2+\frac{1}{\alpha-1}\log \left(p^{\alpha}+(1-p)^\alpha\right),&\alpha>1\\
		H(p+q-2pq)-H(p),&\alpha=1.
		\end{cases}
		\end{align}
		Therefore, $\mcl L^{\text{max}}_{\alpha}(X\to Y|Z)\leq \mcl L^{\text{max}}_{\alpha}(X\to Y)$ with equality if and only if $\alpha>1$ or $q=0.5$. 
\end{example}
As a contrast, for the same binary $(X,Y)$ in Example \ref{exmp:binary-condReduceLeak} we show a case in which the Markov chain $Z-X-Y$ does not hold, so that side information causes the released data leak more information about the original data. 
\begin{example}
Let side information $Z \sim Ber(p)$ and $Z \perp X$.
Let $Y=X$ for $Z=0$ and $Y=X\oplus 1$ for $Z=1$, such that $P_{X,Y}$ is the same as that in Example \ref{exmp:binary-condReduceLeak}. From Thm. \ref{Thm:MaxCondAlphaLeak_Equialent}, we have $\mcl L^{\text{max}}_{\alpha}(X\to Y|Z)=\log 2 > \mcl L^{\text{max}}_{\alpha}(X\to Y)$ from \eqref{eq:MaxAlphaLeak-Example}.
\end{example}

\section{Concluding Remarks}
We have shown that in a data publishing setting, when the released data is generated from original data via private randomness (i.e., side information is not involved in the generation), maximal $\alpha$-leakage is robust to \textit{arbitrary side information} an adversary may have. Building upon our earlier result on composition that leakage over multiple releases can be bounded as $\mcl L^{\text{max}}_{\alpha}(X\hspace{-2pt}\to\hspace{-2pt}(Y,Z))\leq \mcl L^{\text{max}}_{\alpha}(X\hspace{-2pt}\to\hspace{-2pt} Y)+\mcl L^{\text{max}}_{\alpha}(X\hspace{-2pt}\to\hspace{-2pt} Z)$, we conjecture a tighter composition theorem that limits leakage over multiple releases as
\begin{IEEEeqnarray}{l}\label{eq:MaxAlphaLK-ChainRule}
\hspace{-12pt}\mcl L^{\text{max}}_{\alpha}(X\hspace{-2pt}\to\hspace{-2pt}(Y,Z))\leq \mcl L^{\text{max}}_{\alpha}(X\hspace{-2pt}\to\hspace{-2pt} Y)+\mcl L^{\text{max}}_{\alpha}(X\hspace{-2pt}\to\hspace{-2pt} Z|Y),
\end{IEEEeqnarray}thereby suggesting that successive releases should leverage adversarial knowledge. 

\appendices
\section*{Appendix: Proof of Theorem \ref{Thm:MaxCondAlphaLeak_Equialent} }\label{Proof:Thm:MaxCondAlphaLeak_Equialent}
	From Thm. \ref{Thm:CondAlphaLeak_Equialent}, we can simplify $\mcl L_{\alpha}^{\text{max}}(X\to Y|Z)$ in \eqref{eq:CondMaxAlphaLeak_Def} as
	\begin{align}
	\mcl L_{\alpha}^{\text{max}}(X\to Y|Z)=
	\sup\limits_{U- X- Y|Z} I_{\alpha}^{\text{A}}(U;Y|Z).
	\end{align}
	For $\alpha=1$, we have
	\begin{align}
	\mcl L_{1}^{\text{max}}(X\to Y|Z)=\sup_{U:U-(X,Z)-Y} I(U;Y|Z).
	\end{align}
	Under the Markov chain $U-X-Y|Z$, by the data processing inequality, we have $I(U;Y|Z)\leq I(X;Y|Z)$ with equality if and only if $I(X;Y|U,Z)=0$.
	Thus,
	\begin{align}
	\mcl L_{1}^{\text{max}}(X\to Y|Z)=I(X;Y|Z).
	\end{align}
	
Now consider $\alpha>1$. We first upper bound $\mcl L_\alpha^{\max}(X\to Y|Z)$. To show that this is upper bounded by the expression in \eqref{eq:CondMaxAlphaLeak_Equialent}, we show that for any variable $U$ satisfying the Markov chain $U-X-Y|Z$, the conditional $\alpha$-leakage is upper bounded by this same expression. For any such $U$, we have
\begin{IEEEeqnarray}{l l}
		&I_{\alpha}^{\text{A}}(U;Y|Z)\nonumber\\
	= &\frac{\alpha}{\alpha-1}\log  \frac{\sum\limits_{y,z}\left(\sum\limits_u P_{U,Y,Z}(u,y,z)^\alpha\right)^{\frac{1}{\alpha}}}{\sum\limits_z\left(\sum\limits_uP_{U,Z}(u,z)^\alpha\right)^{\frac{1}{\alpha}}}\label{eq:Thm-CMALk_Eq-InPf1}\\
	\leq&\frac{\alpha}{\alpha-1}\log \sup_{z\in\supp(Z)} \frac{\sum\limits_{y}\left(\sum\limits_u P_{U,Y,Z}(u,y,z)^\alpha\right)^{\frac{1}{\alpha}}}{\left(\sum\limits_uP_{U,Z}(u,z)^\alpha\right)^{\frac{1}{\alpha}}}
	\label{eq:Thm-CMALk_Eq-InPf2}\\
	=& \sup_{z\in\supp(Z)} I_\alpha^{\text{A}}(U;Y|Z=z)   \label{eq:Thm-CMALk_Eq-InPf3} \\	
	\le& \sup_{z\in\supp(Z)} \sup_{P_{\tilde{X}|\tilde{U}}:P_{\tilde{X}|\tilde{U}}\ll P_{X|Z=z}}  \sup_{P_{\tilde{U}}} I_{\alpha}^{\text{A}}(\tilde{U};Y|Z=z)   \label{eq:Thm-CMALk_Eq-InPf4} \\
   = & \sup_{z\in\supp(Z)} \sup_{P_{\tilde{X}|\tilde{U}}:P_{\tilde{X}|\tilde{U}}\ll P_{X|Z=z}}  \sup_{P_{\tilde{U}}} I_\alpha^{\text{S}}(\tilde{U};Y|Z=z)   \label{eq:Thm-CMALk_Eq-InPf5} \\		
		\le& \sup_{z\in\supp(Z)} \sup_{P_{\tilde{X}}\ll P_{X|Z=z}} I_\alpha^{\text{S}}(\tilde{X};Y|Z=z)   \label{eq:Thm-CMALk_Eq-InPf6} 
\end{IEEEeqnarray}
	where 
	\begin{itemize}
		\item  the inequality in \eqref{eq:Thm-CMALk_Eq-InPf2} is from the fact that 
		for any nonnegative $a_i,b_i$, 
		\begin{equation}
		\frac{\sum_i a_i}{\sum_i b_i} \le \max_i \frac{a_i}{b_i},
		\end{equation}
		\item \eqref{eq:Thm-CMALk_Eq-InPf3} follows by the definition of Arimito MI,
		\item in \eqref{eq:Thm-CMALk_Eq-InPf4}, the variables are distributed according to $P_{\tilde{U}}(u)P_{\tilde{X}|\tilde{U}}(x|u) P_{Y|X,Z}(y|x,z)$,
		\item \eqref{eq:Thm-CMALk_Eq-InPf5} follows because Arimoto and Sibson MIs have the same supremum over the input distribution,
		\item \eqref{eq:Thm-CMALk_Eq-InPf6} follows from the facts that Sibson MI satisfies the data processing inequality, and $\tilde{U}-\tilde{X}-Y|Z=z$ forms a Markov chain.
   \end{itemize} 
We now lower bound $\mcl L_\alpha^{\max}(X;Y|Z)$ by constructing a specific $U$ satisfying $U-X-Y|Z$.
	For a given $P_{X,Y,Z}$, let 
	\begin{align}\label{eq:Thm-CMALk_LBDefz-InPf1}
	\hspace{-10pt}z^*\hspace{-2pt}=\arg \hspace{-4pt}\sup_{\substack{z\in\supp(Z)}} \hspace{-6pt}\sup_{\substack{P_{\tilde{X}}\\\ll P_{X|Z=z}}}\hspace{-6pt} 
	\sum\limits_{y}\hspace{-2pt}\left(\hspace{-2pt}\sum\limits_{x}\hspace{-1pt}P_{\tilde{X}}(x) P_{Y|X,Z}(y|x,z)^\alpha  \hspace{-3pt}\right)^{\hspace{-4pt}\frac{1}{\alpha}}\hspace{-7pt}.
	\end{align}
	We will define a variable $U$ with alphabet consisting of several disjoint subsets. We use $\mcl X_{z^*}$ to indicate the conditional support of $X$ given $Z=z^*$, i.e., $\mcl X_{z^*}\triangleq \{x\in\mcl X: P_{X,Z}(x,z^*)>0\}$. For each $x\in\mathcal{X}_{z^\star}$, let $\mathcal{U}_{x,z^\star}$ be disjoint, finite sets. Also let $\mathcal{U}_0$ be a finite set (disjoint from those above). The cardinality of each of these sets will be determined later. Finally, let the alphabet of $U$ be $	\mathcal{U}=\mcl U_0\cup \bigcup_{x\in\mcl X_{z^\star}} \mathcal{U}_{x,z^\star}.$
	We define the conditional distribution $P_{U|X,Z}$ as follows. Let
	\begin{align}\label{eq:Thm-CMALk_ConU-Inp}
		P_{U|X,Z}(u|x,z)=\begin{cases}
		\frac{1}{|\mathcal{U}_{x,z^\star}|}, & z=z^\star,\ u\in\mathcal{U}_{x,z^\star} \\
		\frac{1}{|\mathcal{U}_0|}, & z\ne z^\star,\ u\in\mathcal{U}_0\\
		0, & \text{otherwise}.
		\end{cases}
	\end{align}
	For the constructed $U$ above, the conditional Arimoto MI is
	\begin{IEEEeqnarray}{l }
	\hspace{-15pt}	I_{\alpha}^{\text{A}}(U;Y|Z) 
		= \frac{\alpha}{\alpha-1}\log \frac{\sum\limits_{y,z}\left(\sum\limits_{u} P_{U,Y,Z}(u,y,z)^\alpha\right)^{\frac{1}{\alpha}}}{\sum\limits_z\left(\sum\limits_{u}P_{U,Z}(u,z)^\alpha\right)^{\frac{1}{\alpha}}}.\label{eq:Thm-CMALk_ConUeq-InPf0}
	\end{IEEEeqnarray}
The numerator in \eqref{eq:Thm-CMALk_ConUeq-InPf0} can be written as
\begin{IEEEeqnarray}{l l}
	&\sum_{y,z}\left( \sum_u P_{U,Y,Z}(u,y,z)^\alpha\right)^{\frac{1}{\alpha}}\nonumber\\
	=&\sum_{y,z} \left(\sum_u \left(\sum_x P_{U|X,Z}(u|x,z) P_{X,Y,Z}(x,y,z)\right)^\alpha \right)^{\frac{1}{\alpha}}\\	
	=&\sum_{y,z\ne z^*} \left( |\mathcal{U}_0| \left(\sum_x \frac{1}{|\mathcal{U}_0|} P_{X,Y,Z}(x,y,z)\right)^\alpha \right)^{\frac{1}{\alpha}}\nonumber\\
		& +\sum_y \left( \sum_x |\mathcal{U}_{x,z^*}| \left( \frac{1}{|\mathcal{U}_{x,z^*}|} P_{X,Y,Z}(x,y,z^*)\right)^\alpha\right)^{\frac{1}{\alpha}}\\	
	=& \frac{1-P_Z(z^*)}{|\mathcal{U}_0|^{1-\frac{1}{\alpha}}}
		+\sum_y \left(\sum_x |\mathcal{U}_{x,z^*}|^{1-\alpha} P_{X,Y,Z}(x,y,z^\star)^\alpha \right)^{\frac{1}{\alpha}} \label{eq:Thm-CMALk_ConUeq-InPf2}		
\end{IEEEeqnarray}
where the simplification in \eqref{eq:Thm-CMALk_ConUeq-InPf2} is from \eqref{eq:Thm-CMALk_ConU-Inp}.
A similar derivation for the denominator in \eqref{eq:Thm-CMALk_ConUeq-InPf0} gives 
\begin{align}
		&\sum_{z}\left( \sum_u P_{U,Z}(u,z)^\alpha\right)^{\frac{1}{\alpha}}\nonumber\\
		=& \frac{1-P(z^*)}{|\mathcal{U}_0|^{1-\frac{1}{\alpha}}}
		+\left(\sum_x |\mathcal{U}_{x,z^*}|^{1-\alpha} P_{X,Z}(x,z^\star)^\alpha \right)^{\frac{1}{\alpha}}. \label{eq:Thm-CMALk_ConUeq-InPf2D}
\end{align}
Note that for $\alpha>1$, as $|{\mcl U}_0|\to \infty$, $(1-P_Z(z^*))\frac{1}{|{\mcl U}_0|^{1-\frac{1}{\alpha}}} \to 0$. Therefore, for $\alpha>1$ we have
\begin{IEEEeqnarray}{l l}
	&\mcl L_{\alpha}^{\text{max}}(X\to Y|Z)\nonumber\\
	\geq & \,\frac{\alpha}{\alpha-1} \log \frac{\sum\limits_y \left(\sum\limits_x |\mathcal{U}_{x,z^*}|^{1-\alpha} P_{X,Y,Z}(x,y,z^\star)^\alpha \right)^{\frac{1}{\alpha}} }{\left(\sum\limits_x |\mathcal{U}_{x,z^*}|^{1-\alpha} P_{X,Z}(x,z^\star)^\alpha \right)^{\frac{1}{\alpha}}}\label{eq:Thm-CMALk_LB-InPf0}\\
	= & \,\frac{\alpha}{\alpha-1} \hspace{-2pt}\log\sum\limits_{y}\hspace{-3pt}\left(\hspace{-2pt} \frac{ \sum\limits_{x\in\mcl X_{z^*}}\hspace{-5pt}P_{Y|X,Z}(y|x,z^*)^\alpha P(x,z^*)^\alpha|\mcl U_{x,z^*}|^{1-\alpha} }{\sum\limits_{x'\in\mcl X_{z^*}} \hspace{-8pt}P(x',z^*)^\alpha|\mcl U_{x',z^*}|^{1-\alpha}} \hspace{-2pt}\right)^{\frac{1}{\alpha}}.
\end{IEEEeqnarray}
Let $\tilde{X}\in \mcl X_{z^*}$ be random variable with a distribution $P_{\tilde{X}}(x)=\frac{P_{X,Z}(x,z^*)^\alpha |\mcl U_{x,z^*}|^{1-\alpha} }{\sum_{x'\in\mcl X_{z^*}} P_{X,Z}(x',z^*)^\alpha|\mcl U_{x',z^*}|^{1-\alpha}}$. By properly choosing cardinalities $|\mcl U_{x,z^*}|$, for $x\in \mcl X_{z^*}$, we can approach an arbitrary distribution $P_{\tilde{X}}$ on the support $\mcl X_{z^*}$. In addition, the lower bound in \eqref{eq:Thm-CMALk_LB-InPf0} holds for any arbitrary choice of these cardinalities. 
Therefore, we have
\begin{IEEEeqnarray}{l l}
	\hspace{-10pt}&\mcl L_{\alpha}^{\text{max}}(X\to Y|Z)\nonumber\\
	\geq & \hspace{-3pt}\sup_{\substack{P_{\tilde{X}}\\\ll P_{X|Z=z^*}}}\hspace{-2pt}\frac{\alpha}{\alpha-1}\hspace{-2pt}\log\hspace{-2pt} \sum\limits_{y}\hspace{-3pt}\left(\hspace{-2pt}\sum\limits_{x}\hspace{-2pt}P_{\tilde{X}}(x) P_{Y|X,Z}(y|x,z^*\hspace{-1pt})^{\hspace{-1pt}\alpha } \hspace{-4pt}\right)^{\hspace{-2pt}\frac{1}{\alpha}}\\
		=& \hspace{-3pt}\sup_{\substack{z\in\\\supp(Z)}} \hspace{-2pt}\sup_{\substack{P_{\tilde{X}}\ll\\ P_{X|Z=z}}} \hspace{-3pt}\frac{\alpha}{\alpha\hspace{-2pt}-\hspace{-2pt}1}\hspace{-2pt}\log  
		 \hspace{-2pt}\sum\limits_{y}\hspace{-3pt}\left(\hspace{-2pt}\sum\limits_{x}\hspace{-2pt} P_{\tilde{X}}(x) P_{Y\hspace{-1pt}|X\hspace{-0.5pt},\hspace{-0.5pt}Z}(y|x,\hspace{-1pt}z)^\alpha\hspace{-4pt}\right)^{\hspace{-4pt}\frac{1}{\alpha}}\label{eq:Thm-CMALk_LB-InPf2}
\end{IEEEeqnarray}
where \eqref{eq:Thm-CMALk_LB-InPf2} is from the definition of $z^*$ in \eqref{eq:Thm-CMALk_LBDefz-InPf1}.
From \eqref{eq:Thm-CMALk_Eq-InPf6} and \eqref{eq:Thm-CMALk_LB-InPf2}, we have that for $\alpha>1$
\begin{align}
\hspace{-6pt}\mcl L_{\alpha}^{\text{max}}(X\hspace{-2pt}\to\hspace{-1pt} Y|Z)=	\sup_{z\in\supp(Z)} \sup_{P_{\tilde{X}}\ll P_{X|Z=z}}\hspace{-8pt} I_\alpha^{\text{S}}(\tilde{X};Y|Z=z).
\end{align}

%
%

	\ifCLASSOPTIONcaptionsoff
	\newpage
	\fi
	
	\bibliographystyle{IEEEtran}
	\bibliography{JL_References}

\end{document}